\date{}
\newcolumntype{M}[1]{>{\centering\arraybackslash}m{#1}}
\newcolumntype{N}{@{}m{0pt}@{}}
\title{\Huge
{High-Rate Fair-Density Parity-Check Codes}}
\author{\large%
Hessam Mahdavifar
\thanks{H.\ Mahdavifar is with the Department of Electrical and Computer Engineering, Northeastern University, Boston, MA 02115  (email: h.mahdavifar@northeastern.edu.)}
}
\newtheorem{theorem}{{Theorem}}
\newtheorem{lemma}[theorem]{{Lemma}}
\newtheorem{proposition}[theorem]{{Proposition}}
\newtheorem{definition}{{Definition}}
\newcommand{\cA}{{\cal A}} 
\newcommand{\cC}{{\cal C}} 
\newcommand{\cE}{{\cal E}} 
\newcommand{\cG}{{\cal G}}
\newcommand{\cS}{{\cal S}}
\newcommand{\cV}{{\cal V}}
\DeclareMathAlphabet{\mathbfsl}{OT1}{ppl}{b}{it} 
\newcommand{\bc}{\mathbfsl{c}}
\newcommand{\by}{\mathbfsl{y}}
\newcommand{\be}[1]{\begin{equation}\label{#1}}
\newcommand{\ee}{\end{equation}} 
\newcommand{\eq}[1]{(\ref{#1})}
\renewcommand{\leq}{\leqslant}
\renewcommand{\geq}{\geqslant}
\newcommand{\Pref}[1]{Pro\-po\-si\-tion\,\ref{#1}}
\newcommand{\Lref}[1]{Lem\-ma\,\ref{#1}}
\newcommand{\Cref}[1]{Co\-ro\-lla\-ry\,\ref{#1}}
\newcommand{\deff}{\mbox{$\stackrel{\rm def}{=}$}}
\begin{document}

\vspace{10mm}
\maketitle

\begin{abstract}

We introduce fair-density parity-check (FDPC) codes targeting \textit{high-rate} applications. In particular, we start with a base parity-check matrix $H_b$ of dimension $2 \sqrt{n} \times n$, where $n$ is the code block length, and the number of ones in each row and column of $H_b$ is equal to $\sqrt{n}$ and $2$, respectively. We propose a deterministic combinatorial method for picking the base matrix $H_b$, assuming $n=4t^2$ for some integer $t \geq 2$. We then extend this by obtaining permuted versions of $H_b$ (e.g., via random permutations of its columns) and stacking them on top of each other leading to codes of dimension $k \geq n-2s\sqrt{n}+s$, for some $s \geq 2$, referred to as order-$s$ FDPC codes. We propose methods to explicitly characterize and bound the weight distribution of the new codes and utilize them to derive union-type approximate upper bounds on their error probability under Maximum Likelihood (ML) decoding. For the binary erasure channel (BEC), we demonstrate that the approximate ML bound of FDPC codes closely follows the random coding upper bound (RCU) for a wide range of channel parameters. Also, remarkably, FDPC codes, under the low-complexity min-sum decoder, improve upon 5G-LDPC codes for transmission over the binary-input additive white Gaussian noise (B-AWGN) channel by almost 0.5dB (for $n=1024$, and rate\ $=0.878$). Furthermore, we propose a new decoder as a combination of weighted min-sum message-passing (MP) decoding algorithm together with a new progressive list (PL) decoding component, referred to as the MP-PL decoder, to further boost the performance of FDPC codes. 

This paper opens new avenues for a fresh investigation of new code constructions and decoding algorithms in high-rate regimes suitable for ultra-high throughput (high-frequency/optical) applications. 

\end{abstract}

\section{Introduction} 
\label{sec:Introduction}

Codes defined over sparse structures, e.g., low-density parity-check (LDPC) codes, demonstrate excellent near-capacity performances \cite{mackay1999good,richardson2001capacity} and have had a huge impact on the development of wireless communications systems. They currently serve as the channel coding mechanism for the main data channel in the fifth-generation (5G) systems \cite{richardson2018design}. In general, LDPC codes, as defined in the original work by Gallager \cite{gallager1962low}, are specified via a certain sparse structure on the parity-check matrix, e.g., by setting the number of ones in each row and column to be a small fixed number. Consequently, LDPC codes are often designed for fixed \textit{moderate-rate} codes, e.g., the code rate of 5G-LDPC codes is approximately between $1/3$ and $8/9$ \cite{hamidi2018analysis}. 

Besides LDPC codes, majority of other state-of-the-art codes are also designed with decoders that are often optimized at moderate rates in order to address diverse and extreme scenarios in wireless systems. However, ultra-high data rate mmWave and THz links, envisioned for 5G systems and beyond, with precise beam alignment and interference cancellation at the front end lead to an effective high-SNR regime from the decoder’s perspective \cite{mo2014high,song2022robust}. Also, optical channels exhibit very high SNRs with already low effective bit-error rate (BER) yet necessitating forward error correction (FEC) in order to ensure the target BER requirements are met (typically less than $10^{-9}$ and can be as low as $10^{-15}$) \cite{graell2020forward}. Such high-throughput communication scenarios also require very low-latency and low-overhead coding mechanisms rendering various coding schemes with advanced and complex decoders inapplicable.

\begin{figure}[t]
	\centering
	\includegraphics[width=\linewidth]{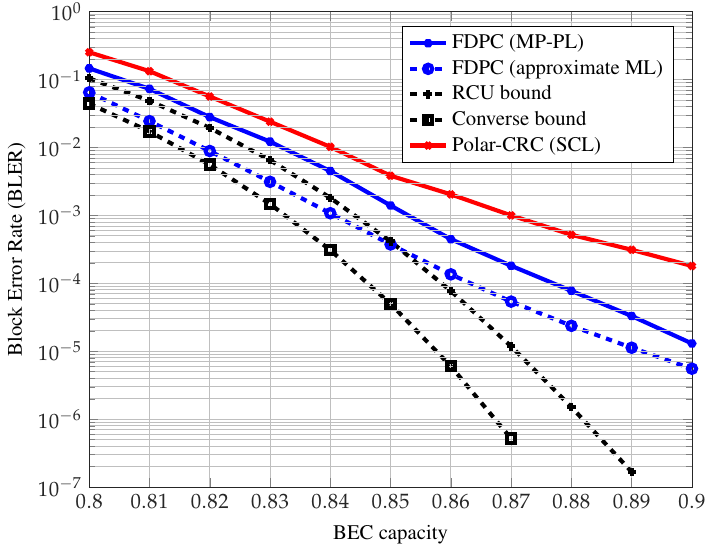}
	\caption{Performance comparison over BECs ($n=256, k=195$): FDPC codes versus fundamental bounds as well as polar-CRC with SCL decoder.}
	\label{BEC-plot}
	\vspace{-0.2in}
\end{figure}

In this paper, we introduce fair-density parity-check (FDPC) codes targeting high-rate applications. In Figure\,\ref{BEC-plot}, we demonstrate the performance of a $(256,195)$ FDPC code over BECs with varying erasure probability between $0.1$ and $0.2$. For comparison, we use polar codes \cite{Arikan} with state-of-the-art successive-cancellation list (SCL) decoder aided with a cyclic redundancy check (CRC) decoder \cite{TV} (CRC$=8$, list size $L=32$). Polar-CRC codes are picked as the coding scheme for 5G control channels, due to their superior performance with SCL decoder in the short-to-medium block length regimes \cite{bioglio2020design}. We observe that FDPC codes perform better than polar codes in the entire range of considered channel parameters, e.g., by more than one order of magnitude in the block error rate for BEC$(0.1)$. Furthermore, an approximate ML bound for the FDPC codes, averaged over their ensemble, is derived via analytical characterization of their weight distribution. In Figure\,\ref{BEC-plot}, we also show the random coding upper bound (RCU bound) and the converse bound by Polyanskiy \textit{et al.} \cite{poly}. The FDPC codes closely follow the RCU bound up to the capacity $0.85$ but start diverging after that due to having a small number of low-weight codewords.

Note that a simplified approximation, also referred to as the \textit{dispersion bound}, is often used to measure the gap between the performance of practical channel codes and what is fundamentally achievable, e.g., with random codes. The dispersion bound provides a closed form approximation for both the RCU and the converse bound \cite{poly}, presuming that these two bounds are very close. However, when the code rate is close to zero or one, the approximation may become inaccurate. This phenomena is studied in depth for low-rate channel coding in \cite{fereydounian2023channel}, and, as we can see in Figure\,\ref{BEC-plot}, the RCU and converse bound can actually differ by one order of magnitude especially as the capacity gets closer to one.

\begin{figure}[t]
	\centering
	\includegraphics[width=\linewidth]{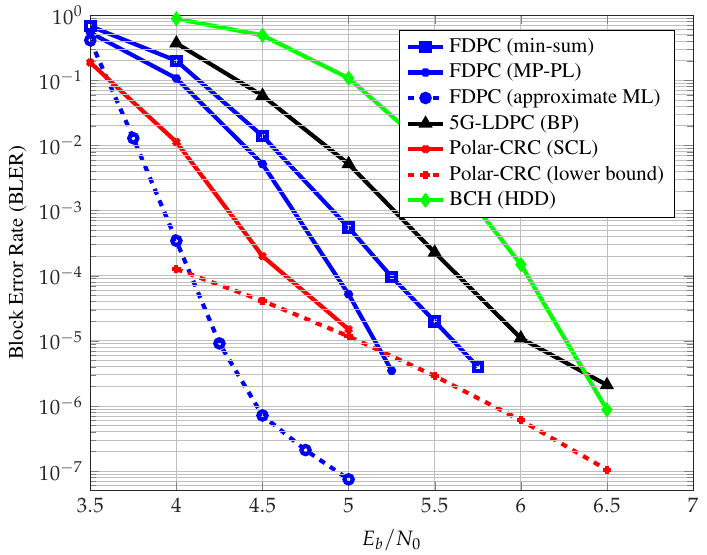}
	\caption{Performance comparison over B-AWGN ($n=1024$, rate\ $=0.878$): FDPC codes versus 5G-LDPC, polar-CRC and BCH codes.}
	\label{AWGN-plot}
	\vspace{-0.2in}
\end{figure}

In Figure\,\ref{AWGN-plot}, we show the performance of a $(1023,898)$ FDPC code, constructed with $n=1024$ and shortened by $1$ bit due to weight distribution considerations to be discussed in Section\,\ref{sec21}, over binary-input additive white Gaussian noise (B-AWGN) channels. The FDPC code with the \textit{plain} min-sum decoder (i.e., no offsetting, weighting, nor parameter tuning) performs better than the 5G-LDPC code under the belief propagation (BP) decoding by about $0.5$\,dB. Furthermore, the LDPC code starts to show an error floor at block error rate (BLER) $\approx 10^{-5}$, but the FDPC code is not expected to hit an error floor till $10^{-6}$ BLER or lower (given the simulations and the ML approximation curve). In fact, we have the exact characterization of the minimum distance and the weight distribution for the low-weight codewords of FDPC codes allowing us to precisely predict the error floor performance. 

It is also observed that the performance of the FDPC code can be boosted since there is a considerable gap between the min-sum performance curve and the approximate ML curve. To this end, we propose a parallelizable and low-complexity variation of min-sum message passing (MP) decoder coupled with a newly proposed progressives list (PL) decoder component, referred to as the MP-PL decoder. The FDPC code with MP-PL decoder can even beat polar-CRC with SCL decoder (CRC$=8$, $L=32$) in lower SNRs at BLER$<10^{-5}$. For this comparison, we use a lower bound on the performance of polar-CRC which is derived by only counting the number of codewords with weight equal to the minimum distance. Also, at this BLER, the MP-PL decoder is, on average, $50$ times less complex and $800$ times faster in latency, compared with SCL, the details of which will be discussed in Section\,\ref{sec:comp}. Note that the lower bound for the polar-CRC is for any decoder, including the ML decoder, and indicates that with ML decoder FDPC code is a better code than polar-CRC at least for BLER$<10^{-4}$. 

The performance of BCH code with hard-decision decoding (HDD) is also shown in Figure\,\ref{AWGN-plot}, that is about 1\,dB worse than that of FDPC codes. There are soft-decision based decoders that have been utilized for BCH codes, e.g., ordered statistics decoders \cite{fossorier1995soft}, however, such decoders are often only used in shorter block lengths and lower rates due to complexity reasons \cite{van2016performance}.

The rest of this paper is organized as follows. In Section\,\ref{sec:two} the details of the FDPC code construction are discussed. In Section\,\ref{sec:three}, we study the weight distribution of FDPC codes and provide bounds on their ML decoding performance. And, in Section\,\ref{sec:dec} we describe the proposed MP-PL decoder, characterize its complexity and latency. Further simulation results are presented in Section\,\ref{sec:sim} to conclude the paper.

\section{Code Construction}
\label{sec:two}

\subsection{Base Matrix}
\label{sec:base}

Let $n=4t^2$, for some integer $t\geq 2$. Then the base parity-check matrix $H_b$ is constructed as follows.

\begin{definition}
\label{hb-def}
The matrix $H_b$ consists of all binary column-vectors of length $4t$ and Hamming weight two with the indices of the two non-zero entries differing by an odd number. 
\end{definition}

Note that all rows of $H_b$ have weight $2t$. 

\noindent{\textbf{Example 1.}} For $t=2$, the matrix $H_b$ is given as follows:
\begin{align*}
\footnotesize{
\left[
\begin{array}{llllllllllllllll}
1 & 1 & 0 & 0 & 0 & 0 & 0 & 0 & 0 & 0 & 0 & 1 & 0 & 0 & 1 & 0 \\
1 & 0 & 1 & 1 & 0 & 0 & 0 & 0 & 0 & 0 & 0 & 0 & 0 & 1 & 0 & 0 \\
0 & 0 & 1 & 0 & 1 & 1 & 0 & 0 & 0 & 0 & 0 & 0 & 0 & 0 & 0 & 1 \\
0 & 1 & 0 & 0 & 1 & 0 & 1 & 1 & 0 & 0 & 0 & 0 & 0 & 0 & 0 & 0 \\
0 & 0 & 0 & 1 & 0 & 0 & 1 & 0 & 1 & 1 & 0 & 0 & 0 & 0 & 0 & 0 \\
0 & 0 & 0 & 0 & 0 & 1 & 0 & 0 & 1 & 0 & 1 & 1 & 0 & 0 & 0 & 0 \\
0 & 0 & 0 & 0 & 0 & 0 & 0 & 1 & 0 & 0 & 1 & 0 & 1 & 1 & 0 & 0 \\
0 & 0 & 0 & 0 & 0 & 0 & 0 & 0 & 0 & 1 & 0 & 0 & 1 & 0 & 1 & 1 
\end{array}
\right].
}
\end{align*}

\begin{lemma}
\label{lem-hb}
The binary rank of $H_b$ is $4t-1$. Furthermore, the code $\cC_b$ with $H_b$ as its parity-check matrix has minimum distance $d_{\min}(\cC_b) = 4$. 
\end{lemma}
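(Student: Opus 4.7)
The plan is to identify $H_b$ with the vertex--edge incidence matrix of the complete bipartite graph $K_{2t,2t}$ over $\Ftwo$, and then read off both assertions from standard cycle-space facts.

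First, I would partition the $4t$ row indices into the $2t$ odd ones and the $2t$ even ones. Since each column of $H_b$ has exactly two $1$'s whose positions differ by an odd number, one of those positions is odd and the other is even, so the column corresponds to an unordered pair of one odd and one even index. There are $(2t)(2t) = 4t^2 = n$ such pairs, matching the number of columns, so up to reordering columns $H_b$ is the incidence matrix of $K_{2t,2t}$ with the bipartition determined by parity of index.

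For the rank, I would work directly over $\Ftwo$: a linear dependence $\sum_v \alpha_v r_v = 0$ among the rows forces $\alpha_u = \alpha_w$ for every edge $\{u,w\}$, i.e., $\alpha$ is constant on each connected component. Since $K_{2t,2t}$ is connected the only dependence is the all-ones relation, so $\rank H_b = 4t - 1$.

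For the minimum distance, observe that $\cC_b = \ker H_b$ is precisely the set of incidence vectors of edge subsets in which every vertex has even degree, i.e., the cycle space of $K_{2t,2t}$. Any nonzero element is a nonempty even subgraph, which must contain a cycle (iteratively delete isolated vertices; what remains has minimum degree at least $2$, hence a cycle), so its weight is at least the girth of $K_{2t,2t}$. That girth equals $4$: bipartiteness rules out odd cycles, while any two distinct odd indices together with any two distinct even indices (which exist because $2t \geq 4$) span a $C_4$, realizing weight $4$. Hence $d_{\min}(\cC_b) = 4$.

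The only conceptual step is the graph identification in the first paragraph; everything else is bookkeeping. The one subtlety I would watch for is that over $\Ftwo$, as opposed to the real signed-incidence setting, the rank drop of the incidence matrix is controlled purely by the number of connected components and is insensitive to bipartiteness, so the direct row-dependence argument above sidesteps any oriented matrix-tree reasoning.
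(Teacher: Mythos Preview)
Your proof is correct. The underlying logic matches the paper's---connectedness forces the row dependence to be all-or-nothing, and the parity constraint on the two nonzero positions in each column rules out weight-$3$ codewords---but your packaging is different and cleaner. The paper argues directly on the Tanner graph of $H_b$ and handles the minimum-distance claim case by case (distinct columns, then no three columns summing to zero via the odd-difference condition, then exhibiting four). You instead recognize $H_b$ as the vertex--edge incidence matrix of $K_{2t,2t}$, which immediately recasts $\cC_b$ as the cycle space and reduces $d_{\min}$ to the girth; the odd-difference condition becomes simply ``bipartite, hence no odd cycles.'' Both arguments are short, but yours imports the result wholesale from a standard picture, while the paper's stays self-contained and does not name $K_{2t,2t}$ explicitly. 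One small note: your remark that over $\Ftwo$ the rank defect of an incidence matrix equals the number of connected components (independent of bipartiteness) is exactly the content of the paper's Tanner-graph argument, so the two rank proofs are really the same sentence in different notation.
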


\begin{proof}
Note that the sum of all rows of $H_b$ equals zero. Hence, rank$(H_b) \leq 4t-1$. Now, suppose that the sum of a subset $\cS$ of rows of $H_b$ is zero. Consider the tanner graph $\cG$ representing the parity-check matrix $H_b$. Then any two variable nodes in $\cG$ connecting via a check node must either both appear in $\cS$ or both not appear in $\cS$ (since the column weights of $H_b$ are exactly two). Since $\cG$ is connected, then the only non-trivial possibility for $\cS$ is the set of all rows of $H_b$. This implies rank$(H_b) = 4t-1$. 

With regards to $d_{\min}(\cC_b)$, columns of $H_b$ are distinct and no three of them sum up to zeros. The latter is due the property of $H_b$ that the non-zero entries in any of its columns differ by an odd number. There are selections of four columns of $H_b$ summing to zeros. Hence, $d_{\min}(\cC_b) = 4$. 
\end{proof}

\subsection{Expanding via random permutations and cascading}

Let $s\geq 2$. Then the FDPC code of order $s$ is defined by applying $s-1$ different permutations to the base matrix $H_b$, where permutations are applied column-wise, and stacking them on top of each other in order to obtain the overall parity-check matrix $H$. More specifically, let $\pi_1,\dots,\pi_{s-1}$ be distinct permutations on the set $[n]:=\{1,2,\dots,n\}$. Then the parity-check matrix $H$ for the order-$s$ FDPC code associated with this set of permutations is given as follows:
\begin{align}
\label{H-def}
H = 
\left[
\begin{array}{l}
\hspace{5mm}H_b\\
\hspace{2mm}\pi_1(H_b)\\
\hspace{6mm}.\\
\hspace{6mm}.\\
\hspace{6mm}.\\
\pi_{s-1}(H_b)
\end{array}
\right]_{r \times n},
\end{align}
where $r=2s\sqrt{n}=4st$. Note that the rows of $H$ may not be linearly independent. Hence, the dimension of an order-$s$ FDPC code is greater than or equal to $n - r + s$. In this paper, we focus on order-$2$ FDPC codes for the most part. In this case, we simply work with one permutation $\pi$, and the parity-check matrix $H$ that is the result of stacking $H_b$ and $\pi(H_b)$ on top of each other. 

\section{Weight Distribution and ML Bound}
\label{sec:three}

\subsection{Weight Distribution}
\label{sec21}

We first derive the weight distribution for the order-$1$ FDPC code $\cC_b$ with $H_b$ as its parity-check matrix. To this end, consider the Tanner graph $\cG_b$ representing $H_b$, where variable nodes are indexed by $1,2,\dots,n$, and check nodes are indexed by $1,2,\dots,r$, where $r=4t$. Then the following lemma describes a certain property of codewords in $\cC_b$.

\begin{lemma}
\label{lemma-loop}
Let $\bc = (c_1,c_2,\dots,c_n)$ be a non-zero codeword of weight $w$ in $\cC_b$. Let $I_{\bc} = \{i_1,i_2,\dots,i_w\}$ denote its \textit{support}, i.e., the indices of non-zero entries in $\bc$, and consider the sub-graph $\cG'_b \subseteq \cG_b$ induced by variable nodes $i_1,i_2,\dots,i_w$ and their neighbors. Then $\cG'_b$ is either a loop or a collection of non-overlapping loops. 
\end{lemma}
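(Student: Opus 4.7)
The plan is to reduce the claim to showing that every vertex of $\cG'_b$ has positive even degree, after which a standard Eulerian-decomposition argument produces the required loops. First I would handle the variable nodes: since every column of $H_b$ has Hamming weight exactly $2$, every variable node of $\cG_b$ has degree $2$, and the induced subgraph $\cG'_b$ is defined so that every variable node in $I_\bc$ is included together with both of its neighbors in $\cG_b$; hence every variable node of $\cG'_b$ has degree exactly $2$.

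Next I would translate the codeword condition into a parity statement about check-node degrees. For any check node $j$ appearing in $\cG'_b$, its degree there equals $\sum_{i \in I_\bc} (H_b)_{j,i}$, which must be even because $H_b \bc^T = \zero$ over $\Ftwo$. Since a check node only appears in $\cG'_b$ if it has at least one neighbor in $I_\bc$, this degree is actually at least $2$. Consequently every vertex of $\cG'_b$ has positive even degree.

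With this in hand, I would invoke the classical fact that a finite graph in which every vertex has even degree decomposes into edge-disjoint cycles, i.e., loops. A short self-contained argument is iterative: starting from any vertex and walking along unused edges one can never get stuck, since each vertex has even residual degree, so the walk must revisit a vertex; the resulting closed walk contains a cycle, which can be removed while preserving the even-degree property, and the process terminates.

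The main subtlety is the reading of ``non-overlapping loops,'' which I take to mean edge-disjoint cycles. This is the natural notion here, since an arbitrary even subgraph of $\cG_b$ may legitimately have a check node of degree greater than $2$; in the $K_{2t,2t}$ view of $\cG_b$ (vertices $=$ check nodes, edges $=$ variable nodes, using that each column of $H_b$ joins an odd-indexed to an even-indexed row), such a check node simply corresponds to two cycles meeting at a shared vertex. Under the edge-disjoint reading, the decomposition above proves the lemma, and it is also the most convenient viewpoint for enumerating codewords by cycle structure in the subsequent weight-distribution analysis.
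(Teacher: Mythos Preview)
Your proposal is correct and follows essentially the same approach as the paper: both arguments observe that every variable node in $\cG'_b$ has degree exactly $2$ (column weight of $H_b$) and every check node has positive even degree (the parity-check condition $H_b\bc^T=\zero$), and then peel off cycles iteratively until nothing remains. Your write-up is somewhat more careful than the paper's, in particular by making explicit that ``non-overlapping'' should be read as edge-disjoint (so that a check node of degree $\geq 4$ may be shared between cycles) and by noting the $K_{2t,2t}$ interpretation; the paper's proof leaves this implicit.
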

\begin{proof}
Note that the columns of $H_b$ indexed by indices in $I_{\bc}$ sum up to zeros. Now, consider $i_j \in I_{\bc}$ and let $j_1$ and $j_2$ denote the indices of non-zero entries in the $i_j$-th column of $H_b$. Then there must be at least one other variable node in $I_{\bc}$ (more precisely, an odd number of them), other than $i_j$, that is connected to the check node $j_1$. The same is true for the check node $j_2$. Then we follow the same argument for these two until we eventually reach a loop (since some variable node will be eventually revisited). This loop is contained in $\cG'_b$, and we can remove it from $\cG'_b$ and repeat the same argument for what is left in $\cG'_b$. This concludes the proof. 
\end{proof}

\noindent{\textbf{Example 2.}} Consider the code $\cC_b$ with parameter $t=2$, as presented in Example 1. Then the vector $(1,1,1,0,1,0,0,0,0,0,0,0,0,0,0,0)$ is a codeword in $\cC_b$. The sub-graph induced by columns indexed by $1,2,3,5$ consists of all edges connecting variable nodes $1,2,3,5$ and check nodes $1,2,3,4$, which is a loop of length $8$. 

\begin{definition}
We say that a codeword in the order-$1$ FDPC code $\cC_b$ is \textit{irreducible} if its corresponding induced sub-graph $\cG'_b$, as discussed in \Lref{lemma-loop}, consists of a single loop. For instance, the codeword presented in Example\,2 (or any other codeword of weight $4$ or $6$) is an irreducible codeword. 
\end{definition} 

Note that if a codeword $\bc$ is not irreducible, then it can be written as a sum of irreducible codewords with non-overlapping supports. We refer to this collection of irreducible codewords as a \textit{base} for the codeword $\bc$, which is unique for each codeword. Note also that there are irreducible codewords of any even weight between $4$ and $4t$, which is the maximum possible length for irreducible codewords. The next lemma presents the exact number of irreducible codewords of any given weight.

\begin{lemma}
\label{irrd-lemma}
Let $m$ with $4 \leq m \leq 4t$ be an even integer. Then there exists
\be{A_ir_def}
A^{(\text{ir})}_m \,\deff\,\prod_{i=0}^{m/2-1} (2t-i)^2/m
\ee
irreducible codewords of weight $m$ in $\cC_b$. For any other $m$, there exists no irreducible codeword of weight $m$.
\end{lemma}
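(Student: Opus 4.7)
The plan is to reformulate the counting problem as a cycle-counting problem in a complete bipartite graph $K_{2t,2t}$, and then apply a standard combinatorial count.

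First I would observe the following reinterpretation of $H_b$. Identify the $4t$ check-node indices with vertices of a bipartite graph whose two parts are the odd indices $\{1,3,\dots,4t-1\}$ and the even indices $\{2,4,\dots,4t\}$, each of size $2t$. By Definition~\ref{hb-def}, each column of $H_b$ has its two ones in positions of opposite parity; hence each variable node corresponds naturally to an edge between an odd and an even check-node vertex. Since there are $2t\cdot 2t = 4t^2 = n$ such ordered pairs and $n$ columns, and all columns of $H_b$ are distinct, the variable-to-edge map is a bijection onto the edge set of the complete bipartite graph $K_{2t,2t}$. In this encoding, a loop of length $2m$ in the Tanner graph $\cG_b$ (alternating between $m$ check nodes and $m$ variable nodes) corresponds to a simple cycle of length exactly $m$ in $K_{2t,2t}$.

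Next, using \Lref{lemma-loop} together with the above identification, I would argue that irreducible codewords of weight $m$ in $\cC_b$ are in bijection with simple cycles of length $m$ in $K_{2t,2t}$. Indeed, by definition the induced subgraph $\cG'_b$ of an irreducible codeword is a single loop in $\cG_b$; translating through the edge identification above, this loop is exactly a simple cycle in $K_{2t,2t}$ whose edge set is the support of the codeword. Conversely, any simple cycle of length $m$ in $K_{2t,2t}$ gives (via its edge set) a set of $m$ columns of $H_b$ summing to zero (because in $K_{2t,2t}$ every vertex of the cycle has degree exactly two in the cycle, so every row of the selected submatrix has even weight), and by construction its Tanner-graph preimage is a single loop, hence irreducible. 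Since $K_{2t,2t}$ is bipartite, every simple cycle has even length at least $4$, and clearly at most $2\cdot 2t = 4t$, explaining the restriction on $m$ in the statement.

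It then remains to count simple cycles of length $m=2\ell$ in $K_{2t,2t}$. I would choose the $\ell$ vertices to be used from each side: $\binom{2t}{\ell}^2$ ways. For any fixed choice, I would count the cyclic interleavings $a_1 b_1 a_2 b_2 \cdots a_\ell b_\ell$ as $\ell!\,\ell!$ orderings, then divide by $2\ell$ to account for the $\ell$ bipartite-preserving rotations and the reflection, giving $\ell!(\ell-1)!/2$ cycles per vertex choice. Multiplying and simplifying,
\begin{equation*}
\binom{2t}{\ell}^{\!2}\cdot\frac{\ell!(\ell-1)!}{2}
=\frac{1}{2\ell}\left(\frac{(2t)!}{(2t-\ell)!}\right)^{\!2}
=\frac{1}{m}\prod_{i=0}^{m/2-1}(2t-i)^2,
\end{equation*}
which is precisely the expression in \eq{A_ir_def}. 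For $m$ odd or $m\notin[4,4t]$, bipartiteness and the vertex count of $K_{2t,2t}$ immediately yield zero cycles, hence zero irreducible codewords.

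The main conceptual step is justifying the bijection in the second paragraph (in particular, that no accidental cancellations produce extra or missing irreducible codewords); the cycle count itself is routine once this identification is in place.
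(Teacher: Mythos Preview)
Your proposal is correct and supplies exactly the kind of argument the paper only alludes to: the paper's proof is the single line ``by following straightforward counting arguments,'' and your $K_{2t,2t}$ identification together with the standard cycle count in a complete bipartite graph is precisely such an argument carried out in full. The parity observation (two ones in positions of opposite parity) that yields the bijection between variable nodes and edges of $K_{2t,2t}$ is the key structural point, and your overcount-by-$2\ell$ reasoning for the cycle enumeration is accurate.
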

\begin{proof}
The proof is by following straightforward counting arguments. 
\end{proof}

For $w\in [n]$, let $A_w$ denote the number of codewords of weight $w$ in the code $\cC_b$. The next lemma builds upon \Lref{irrd-lemma} in order to obtain bounds on the weight distribution coefficients $A_w$'s. 
\begin{proposition}
\label{weight-ub}
For any $n$ and $w \leq n$ we have
\be{weight-ub-eq}
A_w \leq \sum_{a_m\text{'s}} \prod_{m} {A^{(\text{ir})}_m \choose a_m},
\ee
where the summation is over all non-negative integers $a_4,a_6,\dots,a_{4t}$ with $\sum_{m=4}^{4t} ma_m = w$. 
\end{proposition}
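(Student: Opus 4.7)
The plan is to exploit the decomposition of each codeword of $\cC_b$ into irreducible pieces provided by \Lref{lemma-loop} together with the uniqueness assertion in the definition immediately preceding the proposition. By that lemma, any nonzero codeword $\bc \in \cC_b$ of weight $w$ has induced subgraph $\cG'_b$ that is a disjoint union of loops; each loop corresponds to an irreducible codeword, and $\bc$ equals the sum of these irreducible codewords, whose supports are pairwise disjoint and whose weights sum to $w$. Since disjointness forces the summands to be distinct, this defines an injection $\Phi$ from the set of weight-$w$ codewords into the family $\cF_w$ of unordered subsets $\{\bc_1,\dots,\bc_\ell\}$ of distinct irreducible codewords of $\cC_b$ with pairwise disjoint supports and $\sum_j \wt(\bc_j) = w$.

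Next, I would upper-bound $|\cF_w|$ by dropping the disjointness constraint and simply enumerating all subsets of distinct irreducible codewords whose weights sum to $w$. Grouping such a subset by the vector $(a_m)$, where $a_m$ counts the chosen irreducible codewords of weight $m$, the size condition becomes $\sum_{m=4}^{4t} m a_m = w$ with $m$ even. For each such vector, the number of subsets is $\prod_m \binom{A^{(\text{ir})}_m}{a_m}$ by \Lref{irrd-lemma}, and summing over valid $(a_m)$ produces the right-hand side of \eqref{weight-ub-eq}. Combined with the injectivity of $\Phi$, this gives $A_w \leq |\cF_w| \leq \sum_{(a_m)} \prod_m \binom{A^{(\text{ir})}_m}{a_m}$, which is the claim.

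The only point requiring real care is the uniqueness of the base: the collection of irreducible codewords attached to $\bc$ must depend on $\bc$ alone, not on the order in which loops are peeled off in the proof of \Lref{lemma-loop}. This reduces to observing that the connected components of $\cG'_b$ are precisely its loops, since the parity argument inside \Lref{lemma-loop} makes $\cG'_b$ a $2$-regular bipartite subgraph on both the variable-node and check-node sides; hence the loop decomposition is intrinsic to $\bc$. Beyond this, the argument is purely combinatorial bookkeeping and no further difficulty is anticipated.
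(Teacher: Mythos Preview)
Your approach is correct and is exactly the one the paper takes: it, too, upper bounds $A_w$ by counting all possible collections of irreducible codewords that could serve as a base for a weight-$w$ codeword, relying on the unique loop decomposition from \Lref{lemma-loop} and the uniqueness of the base asserted just before the proposition. Your write-up simply makes the injection $\Phi$ and the binomial bookkeeping explicit where the paper leaves them implicit; the one minor slip is that the parity argument in \Lref{lemma-loop} by itself only forces even check-node degree, not degree exactly $2$, but you are entitled to cite the lemma's conclusion (non-overlapping loops) directly, so this does not affect the argument.
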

\begin{proof}
The upper bound is essentially obtained by upper bounding the number of all possible collections of irreducible codewords that form a base for a codeword of weight $w$. 
\end{proof}

Note that the upper bound in \Pref{weight-ub} can be improved by excluding the counted instances of collections of irreducible codewords that some of them overlap with each other. Taking into account such an improvement results in a cumbersome, yet computable, upper bound which we use while presenting the numerical upper bounds on the performance of ML decoder utilizing the upper bounds on the weight distribution of the code. 

Next, consider the ensemble of all FDPC codes of length $n=4t^2$ and order $s$. In particular, we consider $s=2$ but the results on the weight distribution can be extended to general $s$ in a straightforward fashion. 

Suppose that the permutation $\pi$ in the construction of the order-$2$ FDPC code is selected uniformly at random, and let $\cA^{(2)}_w$ denote the random variable representing the number of codewords of weight $w$ in the resulting code. Then the expected value of the weight, denoted by $\cE\{\cA^{(2)}_w\}$, averaged over all permutations is given as follows:
\be{eq-avgA}
\cE\{\cA^{(2)}_w\} = \frac{A_w^2}{{n \choose w}}.
\ee

\noindent{\textbf{Example 3.}} For $w=4$, combining \eqref{eq-avgA} with \Lref{irrd-lemma} and straightforward computations leads to 
\be{eq-avgA4}
\cE\{\cA^{(2)}_4\} = \frac{3n(\sqrt{n}-1)^4}{2(n-1)(n-2)(n-3)} < \frac{3}{2}. 
\ee
And, similarly, for $w=6$ we have
\be{eq-avgA6}
\cE\{\cA^{(2)}_6\} = \frac{20n(\sqrt{n}-1)^4(\sqrt{n}-2)^4}{(n-1)(n-2)(n-3)(n-4)(n-5)} < 20. 
\ee

\noindent{\textbf{Remark 1.}} Example\,3 demonstrates that the average number of low-weight codewords in the random ensemble of FDPC codes is a small constant. This can be leveraged to easily increase the minimum distance of the code by shortening it, something that can significantly improve the code performance in the error floor region. For instance, by carefully shortening the code by $1$ bit, we obtain FDPC codes of minimum distance at least $6$ with high probability. Note that shortening an $(n,k)$ code by $t$ bits results in an $(n-t,k-t)$ code, which may have a negligible effect on the rate for high-rate codes and small values of $t$. For instance, shortening an order-2 FDPC code of length $2^{10}$, a $(1024,899)$ code, by $20$ bits results in a $(1004,879)$ code reducing the rate only by $0.0024$ from $0.8779$ to $0.8755$. The FDPC code with the performance shown in Figure\,\ref{AWGN-plot} is shortened by $1$ bit to eliminate its only weight-$4$ codeword, resulting in a $(1023,898)$ code. 

\noindent{\textbf{Remark 2.}} In this remark, we discuss the process of shortening the codes in the ensemble more explicitly. For instance, consider the weight $w=4$ and suppose we aim at shortening the FDPC codes by $\alpha_4$ bits, corresponding to weight-$4$ codewords. For each FDPC code $\cC$ in the ensemble with at least $\alpha_4$ codewords, we remove one column from the parity-check matrix with the index corresponding to a non-zero entry of one of the weight-$4$ codewords. This shortening process will reduce the number of weight-$4$ codewords by at least $\alpha_4$. If the number of weight-$4$ codewords in an FDPC code in the ensemble is less than $\alpha_4$, one can still shorten the code by $\alpha_4$ bits and ensure that no weight-$4$ codeword remains in the code. One can continue the shortening process, in a similar fashion, with $\alpha_6$ bits corresponding to weight-$6$ codewords, etc.  

\subsection{ML Bound}

In this section, we derive bounds on the probability of error of the maximum likelihood (ML) decoder for FDPC codes. We start by considering communication over BEC$(\epsilon)$. The following union-type upper bound holds on the probability of error of the ML decoder, denoted by $P_e^{(\text{ML})}$, for any linear code $\tilde{\cC}$ \cite[Chap. 2]{viterbi2013principles}:
\be{Pe_BEC}
P_e^{(\text{ML})} \leq \sum_{w=d_{\min}}^{n} \tilde{A}_w \epsilon^w,
\ee
where $\{\tilde{A}_w: w=d_{\min},\dots,n\}$ denote the weight distribution of $\tilde{\cC}$, and $d_{\min}$ is the minimum distance of $\tilde{\cC}$. The reasoning behind this bound is as follows. Given an erasure pattern $\cE$ the ML decoder fails if and only if there is a non-zero codeword $\bc$ in $\tilde{\cC}$ whose support is a subset of the erasure positions. And the bound in \eq{Pe_BEC} is a union bound on the probability that the set of erasure positions contain a codeword, where the union bound is taken with respect to all possible codewords. However, this upper bound, with all the terms therein, may become numerically inaccurate to compute as each term $\tilde{A}_w \epsilon^w$ is the product of two terms, with one decaying exponentially small and then other one growing exponentially large. Therefore, instead of the whole expression in \eq{Pe_BEC}, a common approach is to consider a threshold $w_t$ and compute the expression only up to the $w_t$-th term. For a properly chosen value of $w_t$, this gives a fairly precise approximation for the performance of the ML decoder. This is often referred to as an approximate union bound on the performance of ML decoder.

For the ensemble of FDPC codes, say the order-2 one, one could take the expected value from both sides of \eq{Pe_BEC} and arrive at
\be{Pe_BEC2}
\cE\{P_e^{(\text{ML})}\} \leq \sum_{w=4}^{n} \cE\{\cA^{(2)}_w\} \epsilon^w.
\ee

The next lemma discusses how this upper bound can be revised for the ensemble of shortened FDPC codes. 

\begin{proposition}
\label{proposition_ml}
For $w=4,6,\dots,2(d-1)$, for some integer $d>2$, consider integers $\alpha_w \geq \lfloor \cE\{\cA^{(2)}_w\} \rfloor$. Consider the ensemble of shortened FDPC codes by $\sum_{i=2}^{d-1} \alpha_{2i}$ bits and minimum distance $d$, where $\alpha_{2i}$ columns of $H$ are eliminated reducing the number of weight-$2i$ codewords by $\alpha_{2i}$. Let 
\be{gamma-def}
\gamma = \sum_{i=2}^{d-1} \frac{\cE\{\cA^{(2)}_{2i}\}}{\alpha_{2i}+1}.
\ee
and assume that $\gamma < 1$. Then the average probability of ML decoder over the ensemble of FDPC codes with $d_{\min} \geq 2d$ is upper bounded as follows:
\be{Pe_BEC3}
\cE\{P_e^{(\text{ML})}\} \leq \frac{1}{1-\gamma} \sum_{w=2d}^{n} \cE\{\cA^{(2)}_w\} \epsilon^w.
\ee
\end{proposition}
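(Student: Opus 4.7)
The plan is to combine Markov's inequality, a union bound over low weights, and the standard ML union bound \eq{Pe_BEC} for the BEC. Because every codeword of an order-$2$ FDPC code has even weight (by \Lref{lemma-loop} together with the irreducible-codeword decomposition, since sums of even-weight supports remain even), achieving $d_{\min} \geq 2d$ in the shortened code amounts to eliminating all codewords of weight $2i$ for $i=2,\ldots,d-1$. The shortening procedure from Remark 2 is guaranteed to do this whenever $\cA^{(2)}_{2i} \leq \alpha_{2i}$ for every such $i$.

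First I would introduce the events $E_i = \{\cA^{(2)}_{2i} \leq \alpha_{2i}\}$ for $i = 2, \ldots, d-1$. Since $\cA^{(2)}_{2i}$ is a non-negative integer-valued random variable, Markov's inequality gives
\[
P(E_i^c) \;=\; P(\cA^{(2)}_{2i} \geq \alpha_{2i}+1) \;\leq\; \frac{\cE\{\cA^{(2)}_{2i}\}}{\alpha_{2i}+1}.
\]
A union bound over $i$ then yields $P(\bigcup_i E_i^c) \leq \gamma$ and hence $P(\bigcap_i E_i) \geq 1-\gamma$. On the event $\bigcap_i E_i$ the targeted shortening from Remark 2 succeeds: for each $i$, pick for each weight-$2i$ codeword one column from its support and remove all such columns (padding up to $\alpha_{2i}$ columns by arbitrary additional choices if fewer than $\alpha_{2i}$ weight-$2i$ codewords exist). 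Since removing column $j$ of $H$ deletes every codeword with $c_j=1$, this kills every weight-$2i$ codeword. Iterating for $i=2,\ldots,d-1$ leaves no codeword of weight less than $2d$ in the shortened code.

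Let $A^{(s)}_w$ denote the weight distribution of the shortened code. Since shortening only restricts to codewords that vanish on the removed positions, $A^{(s)}_w \leq \cA^{(2)}_w$ pointwise. On $\bigcap_i E_i$ the shortened code has $d_{\min}\geq 2d$, so \eq{Pe_BEC} gives $P_e^{(\text{ML})} \leq \sum_{w=2d}^n A^{(s)}_w \epsilon^w$. Taking conditional expectations and combining the bounds,
\[
\cE\{P_e^{(\text{ML})} \mid \textstyle\bigcap_i E_i\} \;\leq\; \frac{\cE\big\{\sum_{w=2d}^n \cA^{(2)}_w \epsilon^w\big\}}{P(\bigcap_i E_i)} \;\leq\; \frac{1}{1-\gamma} \sum_{w=2d}^n \cE\{\cA^{(2)}_w\} \epsilon^w,
\]
which is the claimed bound.

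The most delicate step is the bookkeeping around shortening: one has to check that the eliminations of weight-$2i$ codewords for different values of $i$ can be performed simultaneously without interference, that $A^{(s)}_w \leq \cA^{(2)}_w$ for every $w$, and that the statement's ``ensemble of FDPC codes with $d_{\min}\geq 2d$'' is to be read as a conditional expectation over the high-probability sub-ensemble on which shortening succeeds. The remaining ingredients---Markov's inequality, a union bound over $O(d)$ events, and the union bound \eq{Pe_BEC}---are routine.
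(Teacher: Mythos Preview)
Your proposal is correct and follows essentially the same approach as the paper: Markov's inequality on each $\cA^{(2)}_{2i}$, a union bound to get the $1-\gamma$ sub-ensemble on which shortening succeeds, and then the ML union bound \eq{Pe_BEC} combined with the elementary inequality $\cE\{X\mid E\}\leq \cE\{X\}/P(E)$ for non-negative $X$. Your write-up is in fact more careful than the paper's sketch, spelling out that all codeword weights are even, that $A^{(s)}_w\leq \cA^{(2)}_w$, and flagging the bookkeeping around simultaneous shortening for different $i$.
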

\begin{proof}
For any $w$, using Markov inequality, the probability that a random FDPC code has more than $\alpha_w$ codewords of weight $w$ is upper bounded by $\frac{\cE\{\cA^{(2)}_{w}\}}{\alpha_{w}+1}$. Then by taking the union bound across all $w$'s, for $w=4,6,\dots,2(d-1)$, the probability that an FDPC code has a codeword of weight less than $2d$, after shortening by $\sum_{i=2}^{d-1} \alpha_{2i}$ bits, is upper bounded by $\gamma$. Hence, at least a fraction of $1-\gamma$ of FDPC codes have minimum distance $d$ and the expected value of codewords of weight $w \geq d$ over these codes is at most $\frac{1}{1-\gamma} \cE\{\cA^{(2)}_w\}$. The rest of the proof is the same as the proof for the originial upper bound in \eq{Pe_BEC}. 
\end{proof}

The approximate ML bound of the ensemble of FDPC codes shown in Figure\,\ref{BEC-plot} is with $w_t =20$, i.e., weight distribution is taken into account up to codewords of weight $20$ in \eqref{Pe_BEC3}. Also, they are shortened by $\alpha_4 = 1$ bit and have $d_{\min} \geq 2d = 6$.  

For transmission over B-AWGN with noise variance $\sigma^2$, an upper bound on the performance of the ML decoder, similar to the one in \eq{Pe_BEC}, holds:
\be{Pe_awgn}
P_e^{(\text{ML})} \leq \sum_{w=d_{\min}}^{n} \tilde{A}_w Q(\frac{\sqrt{w}}{\sigma}),
\ee
where, again, a linear code $\tilde{\cC}$ with weight distribution $\{\tilde{A}_w: w=d_{\min},\dots,n\}$ is considered for transmission. Also, $Q(.)$ is the $Q$-function representing the tail of the cumulative probability distribution of normal Gaussian distribution. 

Also, similarly, the result in \Pref{proposition_ml} can be revised to get an upper bound on the probability of ML decoder for shortened FDPC codes for transmission over the B-AWGN channel:
\be{Pe_awgn2}
\cE\{P_e^{(\text{ML})}\} \leq \frac{1}{1-\gamma} \sum_{w=2d}^{n} \cE\{\cA^{(2)}_w\} Q(\frac{\sqrt{w}}{\sigma}),
\ee
where $\gamma$ is defined in \eq{gamma-def}, and $d$ is as in defined in \Pref{proposition_ml}. The approximate ML bound of the ensemble of FDPC codes shown in Figure\,\ref{AWGN-plot} is with $w_t =30$, i.e., weight distribution is taken into account up to codewords of weight $30$ in \eqref{Pe_awgn2}. Also, they are shortened by $\alpha_4 = 1$ bit and have $d_{\min} \geq 2d = 6$.

\section{MP-PL Decoding Algorithm}
\label{sec:dec}

In this section we present the proposed MP-PL decoding algorithm to boost the performance of FDPC codes. This is done separately for BEC channels and the general class of noisy channels, due to the very different nature of how erased/erroneous bits are decoded in an MP-type decoding for these channels. A major building block of the decoding algorithm, in both cases, is a variation of iterative message-passing (MP) decoding algorithm. Another major building block, which is the main novel component, is a \textit{progressive} list decoder that is combined with multiple stages of the MP decoder to form the MP progressive list (MP-PL) decoder, which will be discussed in details in this section. 

\subsection{MP-PL Decoding Algorithm for BEC}
\label{sec:Dec-BEC}

Let $\bc=(c_1,c_2,\dots,c_n)$ denote the transmitted codeword, which is received at the receiver with some erasures. The MP-PL decoder is run in multiple stages, to be discussed next. 

\noindent{\textbf{Standard MP decoder for BEC.}} In each iteration of the MP decoder, any erased variable node which connects to a check node whose other variable node neighbors are already decoded and known, will be turned from erasure to a successfully decoded bit. More specifically, consider an erased bit $c_{i_1}$ and a parity-check equation involving $c_{i_1}$, i.e., 
$$
c_{i_1}+c_{i_2}+\dots+c_{i_{2t}} = 0.
$$
Note that, as highlighted in Section\,\ref{sec:base}, all rows of $H$ have weight equal to $2t$. Then if all other $c_{i_j}$'s, $j=2,\dots,2t$, are already known/decoded, then $c_{i_1}$ will be also decoded. And the same procedure applies to all variable nodes at once in each iteration. 

\noindent{\textbf{A single stage in the MP-PL decoder.}} In each stage, the standard MP decoder is utilized repeatedly for a certain number of iterations $\lambda_{\text{it}}$, or until it saturates, i.e., when no new erasure is decoded from one iteration to the next, whichever comes first ($\lambda_{\text{it}}=4$ is set in our simulations). At this point, the PL decoding component is initiated. In particular, one of the erasures that is not decoded yet is selected, say $c_{j_i}$, in the $i$-th stage, and then the MP decoding continues along the two possible paths, one with $c_{j_i}=0$ and another one with $c_{j_i} =1$ in the stage $i+1$.  We refer to $j_i$ as the $i$-th \textit{path-splitting} index. This doubles the list size as the MP-PL decoder transitions to the next stage. If at any point in the decoding process of a path, an erasure variable node gets two inconsistent decoded results from two neighboring check nodes, that path will be discarded as this indicates that at some point along the progressive list decoding a wrong choice is pursued when splitting the paths. 

\begin{algorithm}[t]
	\caption{MP-PL Decoding Algorithm for BECs} \label{alg_softmap}
	\textbf{Input:} Channel output $\by$ as a codeword $\bc$ with a certain subset of entries erased; the parity-check matrix $H_{r \times n}$; maximum list size $L=2^l$; maximum iterations per stage $\lambda_{\text{it}}$.
	
	\textbf{Output:} transmitted codeword $\bc$

 \textbf{Available functions:} \textit{MP-iter-BEC}: one single iteration of MP decoder over BEC, with inputs $\by$, $H$ and output $\by'$; \\\textit{Path-split-ind}: selecting the path-splitting index, with input $\by$ and $H$, and output index $j$.
 
	\vspace*{0.05in}
	\begin{algorithmic}[1]
        \State initialize: $\tilde{L} = 1$, $\by_1 = \by$.
        \While {$\tilde{L} < L$, $\bc$ not found}

        \For {$i=1,2,\cdots,\tilde{L}$} 

        \For {$j=1,2,\cdots,\lambda_{\text{it}}$} 
        \State $\by_i = $ \textit{MP-iter-BEC}$(\by_i,H)$
        \If {$\by_i$ has no erasure}
        \State $\bc = \by_i$ and break
        \EndIf
        \EndFor
        \State $j$ = \textit{Path-split-ind}$(\by_i,H)$
        \State $\by_{\tilde{L}+i} = \by_i$
        \State $\by_i(j) = 0$
        \State $\by_{\tilde{L}+i}(j) = 1$

        \EndFor
        \State $\tilde{L} = 2\tilde{L}$
	\EndWhile 

	\end{algorithmic}
\end{algorithm}

\noindent{\textbf{Path-splitting index selection criterion.}} The criterion solely depends on the location of current erasures in the current path. Let $H_e$ denote a sub-matrix of the parity-check matrix $H$ by selecting columns of $H$ corresponding to the current erasures. Let $m_1,m_2,\dots,m_r$ denote the number of ones in rows of $H_e$, where $r$ is the number of rows in both $H$ and $H_e$. Let $m_{i'}$ be the minimum non-zero value in the set $\{m_i: i \in [r]\}$. Then $j_i$ is selected as the index of one of the non-zero entries in the $i'$-th row of $H$ that correspond to a current erasure. 

\noindent{\textbf{Remark 3.}} Roughly speaking, the logic behind the selection of the path-splitting indices is to increase the chances of recovering from more erasures when further applying the MP decoding iterations. For instance, if $m_{i'} = 2$, then the way the current path-splitting index is selected ensures that at least another erasure will be corrected in the next iteration of the MP decoder.

\noindent{\textbf{The MP-PL decoder outcome.}} Let $L=2^l$ denote the maximum list size. Then the MP-PL decoder continues up to the $l$-th stage, or till all erasures are decoded in at least one decoding path, whichever comes first. If there is more than one path in which all erasures are corrected, it implies that there is more than one codeword that match with all the non-erased received bits. In other words, the transmitted codeword is not uniquely decodable for such an erasure pattern scenario. 


A brief pseudo-code for the MP-PL decoder, while skipping some detailed steps, is presented as Algorithm\,1. Some of the omitted details include killing decoding paths where some variable nodes received inconsistent messages in the MP decoder, and handling cases with multiple output codewords.

\subsection{MP-PL Decoding Algorithm for General Noisy Channels}
\label{sec:Dec-gen}

The MP-PL decoder for general noisy channels is also run in multiple stages, same as in the BEC case. However, the criterion for selecting path-splitting indices and the how to efficiently check if a codeword is obtained are essentially different. The latter does not apply to the BEC case as all decoding paths are essentially equally likely as long as there is no inconsistency in their MP updates. The MP decoder we utilize in one single stage of the MP-PL decoder is with the min-sum decoder with weighted updates. 

\noindent{\textbf{Weighted min-sum MP decoder.}} In this decoder, all message updates are done while scaled with a small fixed constant $\beta$, e.g., $\beta = 0.05$ is set in our simulations. Furthermore, we also keep updating the LLRs of the coded bits in each round. More specifically, let $\by^{(\text{cur})} = (y_1^{(\text{cur})},\dots,y_n^{(\text{cur})})$ denote the current \textit{soft} information for the coded bits $\bc = (c_1,\dots,c_n)$ in the LLR domain. Let also $q_{i,j}^{(\text{cur})}$ (and $r_{j,i}^{(\text{cur})}$) denote the current message passed from the $i$-th variable node to the $j$-th check node (and vice versa) when $H(j,i) = 1$.  Then the updated soft information $y_i^{\text{(new)}}$, $q_{i,j}^{(\text{new})}$, and $r_{j,i}^{(\text{new})}$ are as follows:
\begin{align}
\label{r-update}
r_{j,i}^{(\text{new})} &= 
\prod_{i' \in \cV_{j \setminus i}} \text{sign}(q_{i',j}^{(\text{cur})}) \min_{i' \in \cV_{j \setminus i}} |q_{i',j}^{(\text{cur})}|,\\
\label{MP-eq}
y_i^{\text{(new)}} &= y_i^{(\text{cur})} + \beta \sum_{j, H(j,i) = 1} r_{j,i}^{(\text{new})},\\
\label{q-update}
q_{i,j}^{(\text{new})} &= y_i^{\text{(new)}} - \beta r_{j,i}^{(\text{new})},
\end{align}
where $\cV_{j \setminus i} := \{i': i' \neq i, H(j,i')=1\}$. In a sense, this version of MP decoding keeps accumulating all the previously generated LLRs in prior iterations while scaling them down in order to not having the accumulated LLRs saturate quickly. 

\noindent{\textbf{Remark 4.}} As it is evident from \eq{MP-eq}, the scaling of $y_i$'s does not affect the decoder. Hence, in cases such as the AWGN or fading channels, the \textit{plain} channel output can be used as the input the the MP-PL decoder in the first stage. This gives an advantage, especially in non-coherent settings, over most soft-decision decoders which require the knowledge of the channel, e.g., the channel gain and noise variance, in order to generate the LLRs to be fed into the decoder.

\noindent{\textbf{Path-splitting index selection criterion.}} The proposed criterion depends solely on which parity-check equations at the end of the current stage are not yet satisfied. More specifically, let $m_1,m_2,\dots,m_r$, with $m_i \in \{0,1\}$, where $m_i =0$ when the $i$-th parity-check equation (corresponding to the $i$-th row if $H$) is satisfied, and $m_i = 1$, otherwise. Then the \textit{deficiency} of the $j$-th variable node is defined as the sum of $m_{j_i}$, over $j_i$ in the set of the indices of its neighboring check nodes. The variable node with the maximum deficiency is selected as the path-splitting index.

In our simulations, we have limited the number of path splittings to $16$, i.e., the maximum list size is $2^{16}$. Potentially, list-pruning strategies can be adopted, same as in the SCL decoding of polar codes \cite{TV}, to ensure the list size remains small. Also, note that this is the worst case scenario, and the average list size is much smaller (e.g., close to $1$ for high-SNRs). Also note that, as shown in Figure\,\ref{AWGN-plot}, the min-sum decoder is already very good. Hence, the list decoding component, which is activated only when the min-sum fails, is only run for a very small fraction of time.  

\noindent{\textbf{Checking paths for codewords and stopping point.}} For each decoding path at the end of each decoding stage, if all the parity-check equations are satisfied, then the current hard decisions are set as the decoder output and the decoder stops. Also, a certain threshold is set on the number of decoding stages, where the MP-PL decoding algorithm stops and a block decoding failure is announced if a codeword is not reached yet. This threshold is set to $16$ in our simulations.

\subsection{Decoding Complexity and Latency}
\label{sec:comp}

Let the complexity of comparison, addition, and multiplication/division in the soft information/LLR domain be denoted by $\mu_c,\mu_a$, and $\mu_d$, respectively, where the unit could be one logical bit operation. Similarly, let the latency of  comparison, addition, and multiplication/addition be denoted by $\nu_c,\nu_a$, and $\nu_d$, respectively, where the unit could be the delay of one logical bit operation. In each iteration, for each parity-check node, the first and the second minimum over its neighboring variable nodes need to be computed. This is done with the complexity of $2\sqrt{n} \mu_c$ and latency of $2 \log n \nu_c$. Each variable node update is done with the complexity of $2s \mu_a$ and latency of $(\log s +2)\nu_a$ for an order-$s$ FDPC code. Then the total complexity of each iteration is $2sn(2\mu_c+\mu_a)$. Assuming full parallelization of all variable and check node operations, which can be done in principle, the total latency of each iteration is $2 \log n \nu_c+(\log s +2)\nu_a$. Since all parameters other than $n$ can be treated as constants, the complexity and latency of each iteration are essentially $O(n)$ and $O(\log n)$, respectively. With list-decoding, the average complexity of the MP-PL decoder scales with the average of $L$, but not the latency. 

To arrive at the complexity/latency comparisons reported in Section\,\ref{sec:Introduction}, note that the SCL decoding of polar codes has complexity and latency of $O(Ln\log n)$ and $O(n)$, respectively. More specifically, the complexity is $nL\log n(\mu_d+\mu_a)$ and the latency is $n(2\nu_d+\nu_a)$. Then assuming a fixed-point implementation with $8$ bits, one can use $\mu_d \approx 8\mu_a$, $\mu_a \approx 2 \mu_c$, where similar relations hold also for $\nu_c,\nu_a$ and $\nu_d$. Of course, the exact numbers would depend on the actual hardware implementation, however, it is expected that the numbers would stay in the same range. With these rough estimates, we arrive at the complexity comparisons reported in Section\,\ref{sec:Introduction}. Note that we use the average number of iterations and list sizes in MP-PL decoder of FDPC codes at SNR$=5.25$\,dB with the performance reported in Figure\,\ref{AWGN-plot} (where it starts to beat polar-CRC with SCL). More specifically, the average list size is $L_{\text{avg}} \approx 1.63$, and the average number of iterations per sample (taking into accounts all paths in the list) is $\approx 5$. 

\section{Further Simulation Results}
\label{sec:sim}

In addition to the simulation results provided in Section\,\ref{sec:Introduction}, in this section, we briefly study a scenario in the \textit{large} block length regime, e.g., with $n=16384$. In this regime, mainly suitable for optical applications, the latency considerations require highly parallezible decoders. As a result, product codes and their other variations, including staircase codes, have been largely studied in this regime in the literature \cite{graell2020forward}. 

We construct an order-$2$ FDPC code at this length with rate $\approx 0.97$. Constructing such a high-rate code can not be accomplished with a reasonable product coding architecture. In fact, even if one considers a product code with $(256,247)$ extended Hamming code as its two components, the overall rate would be $\approx 0.93$. 
In order to just show a baseline, we pick algebraic BCH codes with HDD decoder. Note that polar codes become largely impractical in this regime due to the high-latency of SC-type decoders. The bit-error rate (BER), often considered as the figure of merit in optical settings, of the FDPC code is shown in Figure\,\ref{AWGN-plot2}. We observe that the FDPC code, with the plain min-sum decoder (no list), is $\approx 1$ dB better than the BCH code of the same length and rate. 

Even though it is difficult to obtain simulations results at very low BERs (e.g., BER$\,=10^{-12}$), relevant for optical applications, using Monte Carlo simulations, we can at least comment on the error floor region of FDPC codes. Note that the BLER performance of a code in its error floor region is mainly dominated by the $A_{d_{\min}} Q(\sqrt{d_{\min}}/\sigma)$ term. With $d_{\min}=6$, and $A_{d_{\min}} <20$, as in \eq{eq-avgA6}, the value of $A_{d_{\min}} Q(\sqrt{d_{\min}}/\sigma)$ is approximately $10^{-10}$ at 6\,dB and $2\times 10^{-13}$ at 7\,dB, for FDPC codes. This together with the curve shown in Figure\,\ref{AWGN-plot2} indicate that the error floor of FDPC code happens at very low error rate regime, i.e., when BLER$<10^{-13}$ (or anticipated BER $<10^{-16}$). 

The results and discussions in this section demonstrate that FDPC codes are also very promising candidates for optical applications in high-rate large block length regimes. 

\begin{figure}[t]
	\centering
	\includegraphics[width=\linewidth]{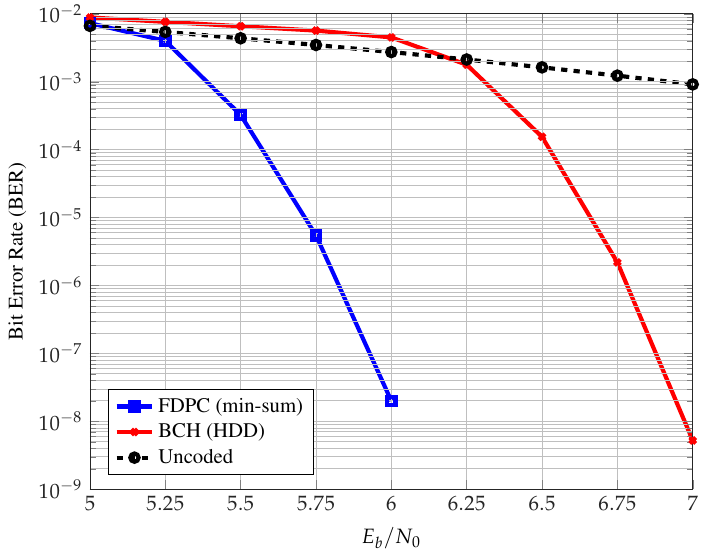}
	\caption{Performance comparison over B-AWGN ($n=16384$, rate$=0.97$)}
	\label{AWGN-plot2}
	\vspace{-0.2in}
\end{figure}

\bibliographystyle{IEEEtran}
\bibliography{ref}

\end{document}